\newtheorem{theorem}{Theorem}
\newtheorem{lemma}{Lemma}
\def\BibTeX{{\rm B\kern-.05em{\sc i\kern-.025em b}\kern-.08em
    T\kern-.1667em\lower.7ex\hbox{E}\kern-.125emX}}
\begin{document}
\title{Bi-SamplerZ: A Hardware-Efficient Gaussian Sampler Architecture for Quantum-Resistant Falcon Signatures}
\author{Binke~Zhao,
        ~Ghada~Alsuhi,
        ~Hani~Saleh,~\IEEEmembership{Senior Member,~IEEE}
        ~Baker~Mohammad,~\IEEEmembership{Senior Member,~IEEE}
\thanks{Binke Zhao is with the School of Integrated Circuits, University of Electronic Science and Technology of China, and was an exchange student at the System on Chip Center, Department of Computer and Information Engineering, Khalifa University, UAE.}
\thanks{Ghada Alsuhi, Hani Saleh, and Baker Mohammad are with the Department of Electrical and Computer Engineering, System on Chip Center, Khalifa University, UAE.}
}

\markboth{IEEE Transactions on Emerging Topics in Computing,~Vol.~18, No.~9, September~2020}%
{Shell \MakeLowercase{\textit{et al.}}: A Sample Article Using IEEEtran.cls for IEEE Journals}
\maketitle

\begin{abstract}
FALCON is a standardized quantum-resistant digital signature scheme that offers advantages over other schemes, but features more complex signature generation process. This paper presents Bi-Samplerz, a fully hardware-implemented, high-efficiency dual-path discrete Gaussian sampler designed to accelerate Falcon signature generation. Observing that the SamplerZ subroutine is consistently invoked in pairs during each signature generation, we propose a dual-datapath architecture capable of generating two sampling results simultaneously. To make the best use of coefficient correlation and the inherent properties of rejection sampling, we introduce an assistance mechanism that enables effective collaboration between the two datapaths, rather than simply duplicating the sampling process. Additionally, we incorporate several architectural optimizations over existing designs to further enhance speed, area efficiency, and resource utilization.  Experimental results demonstrate that Bi-SamplerZ achieves the lowest sampling latency to date among existing designs, benefiting from fine-grained pipeline optimization and efficient control coordination. Compared with the state-of-the-art full hardware implementations, Bi-SamplerZ reduces the sampling cycle count by 54.1\% while incurring only a moderate increase in hardware resource consumption, thereby achieving the best-known area-time product (ATP) for fully hardware-based sampler designs. In addition, to facilitate comparison with existing works, we provide both ASIC and FPGA implementations. Together, these results highlight the suitability of Bi-SamplerZ as a high-performance sampling engine in standardized post-quantum cryptographic systems such as Falcon.

\end{abstract}

\begin{IEEEkeywords}
Discrete Gaussian Sampling, Falcon Signature, Post-Quantum Cryptography, Hardware Accelerator, Dual Datapath Architecture, Rejection Sampling, Cryptographic Hardware, Sampling Optimization
\end{IEEEkeywords}

\section{Introduction}
\label{sec:introduction}
\IEEEPARstart{P}{ost}-quantum cryptography (PQC) has emerged as a critical field in response to the imminent threat posed by the development of quantum computers, for example, the IBM Osprey quantum computer, which boasts 433 qubits~\cite{IBM} and quantum algorithms, such as Shor’s~\cite{shor} and Grover’s~\cite{Grover} algorithms, to classical public key cryptosystems such as RSA~\cite{RSA} and ELGamal~\cite{ELGamal} that were lightweight to provide security in real time~\cite{survey}. To solve this threat to public-key cryptosystems in the next 10 to 15 years~\cite{report}, in 2016, the National Institute of Standards and Technology (NIST) initiated a standardization process for PQC algorithms, leading to the selection of several candidates including Falcon—a lattice-based digital signature algorithm known for its compact signatures and simplest signature verification process. However, Falcon's signature generation process, however, is computationally intensive, primarily due to the discrete Gaussian sampling subroutine named SamplerZ, which accounts for 72\%~\cite{Falcon} of the overall execution time. 

\textbf{Related Works} Prior works have proposed various discrete Gaussian samplers tailored to lattice-based signature and encryption schemes, including BLISS~\cite{blss,bliss7}, LP~\cite{blss,lp2,lp3}, FrodoKEM~\cite{frodo}, and qTESLA~\cite{qtesla} optimized for a fixed set of distribution parameters $(\sigma, \mu)$, and thus lack the flexibility to support the parameter variability required in Falcon. 
An earlier Falcon implementation~\cite{design-time} proposed a design-time configurable hardware architecture for supporting a limited range of $(\sigma, \mu)$ valueswithout supporting run-time parameter flexibility and thus only suitable for Falcon's key generation. Other Falcon-related works include hardware acceleration for the verification step~\cite{verify}, which bypasses Gaussian sampling entirely, and SIMD-based software optimizations for sampling~\cite{simd}.

Recent efforts to accelerate Falcon’s discrete Gaussian sampler fall into two broad categories: hardware/software co-designs and full-hardware implementations.

\textbf{HW/SW Co-Design.}  
Karabulut \textit{et al.}~\cite{Karabulut} proposed a co-design on the Zynq-7000 SoC, achieving a $9.83\times$ speedup in sampling and $2.7\times$ overall signing throughput improvement over software-only baselines. However, their work primarily optimized a multi-stage pipelined multiplier for 73-bit and 69-bit operands, lacking broader architectural refinement.  
Lee \textit{et al.}~\cite{Lee} also adopted a co-design strategy, reducing the sampling cycle count by $3.58\times$. Their system relies on software to batch samples and random inputs, leading to potential inefficiencies in randomness use and limited latency gains due to software-in-the-loop constraints.  
Michael \textit{et al.}~\cite{Micheal} employed high-level synthesis (HLS) to translate the Falcon reference software to hardware. Although this improves global system-level performance, it omits fine-grained optimizations of core modules like the sampler, which remains the dominant bottleneck.

\textbf{Full-Hardware Design.} 
Yu \textit{et al.}~\cite{Yu} introduced a RISC-V scalar-vector co-processor using RISC-V vector extension (RVV) with custom extensions for Falcon to take full advantage of RISC-V vectorized computing~\cite{risc}, integrating a hardware sampler. Nonetheless, the implementation runs at only 83\,MHz, lacks architectural optimization, and suffers from high signature latency.  
FalconSign~\cite{falconsign} presents a dedicated full-hardware architecture, offering the lowest reported sampling latency through targeted SamplerZ optimizations. However, it does not fully exploit hardware resource sharing, deep pipelining, or parallelism. Furthermore, it overlooks the opportunity to jointly optimize the dual-sampling structure inherent to Falcon, leaving potential throughput and efficiency gains untapped.

In this paper, we propose \textbf{Bi-SamplerZ}, a fully hardware-implemented dual-path discrete Gaussian sampler optimized for Falcon signature generation. Bi-SamplerZ simultaneously produces two samples with minimal overhead and achieves significant improvements in latency and hardware efficiency. The source code is available at: \url{https://github.com/Shaibk/Bi-SamplerZ}. And the main contributions of this work are summarized as follows:
\begin{itemize}
    \item \textbf{Dual-Datapath Design with Shared Resources:} 
    Bi-SamplerZ executes two SamplerZ calls in parallel using a dual-datapath architecture. Instead of simply duplicating logic, we carefully orchestrate resource sharing to reduce area overhead while retaining performance, making it the first design to achieve true concurrent sampling under constrained resources.

    \item \textbf{Assisting Mechanism for Rejection Sampling:} 
    A novel cross-path assistance mechanism is introduced to facilitate Bernoulli acceptance in one datapath using partial results from the other. This increases the overall acceptance rate and reduces wasted randomness, minimizing the rejection loop’s impact on latency.

    \item \textbf{Tri-State Gate Array for BaseSampler Optimization:} 
    The traditional counter-based implementation of BaseSampler is replaced with a tri-state gate array, which better exploits the structure of the sampling logic. This results in lower control complexity, improved timing, and reduced area.

    \item \textbf{ASIC and FPGA Implementations:} 
    To demonstrate the practicality and efficiency of Bi-SamplerZ, we implement the design in both ASIC and FPGA. The ASIC version achieves the lowest latency and best area-time product (ATP) among known designs. The FPGA version enables fair benchmarking with existing works and demonstrates the design’s portability to reconfigurable platforms.

\end{itemize}

\section{Background}
\label{sec:Background}
To provide a foundation for the proposed Bi-SamplerZ architecture, this section offers a comprehensive background on the Falcon post-quantum digital signature scheme. We begin by reviewing Falcon’s design rationale and its reliance on lattice-based cryptography. We then introduce the core sampling algorithm—Fast Fourier Sampling—and highlight the central role of discrete Gaussian sampling in signature generation. Finally, we examine the internal structure of the SamplerZ subroutine and its rejection sampling behavior, laying the groundwork for the architectural optimizations proposed in this work.
\subsection{Falcon Post-Quantum Digital Signature Scheme}
Falcon is a lattice-based digital signature algorithm officially selected by NIST as a post-quantum cryptography standard. It follows the Gentry-Peikert-Vaikuntanathan (GPV) framework and utilizes NTRU lattices combined with a fast Fourier-based sampling method to achieve compact signature sizes and efficient verification, making it more suitable for future Internet of Things (IoT) applications~\cite{loT}. The Falcon scheme consists of key generation, signature generation, and signature verification stages.

In Falcon’s signature generation, discrete Gaussian sampling is a critical operation responsible for producing short lattice vectors in the Fourier domain. Notably, in the reference software provided by the NIST submission package~\cite{Falcon}, this sampling step accounts for approximately 72\% of the total signing execution time. Unlike other lattice-based schemes~\cite{dilithium,sphincs+,Kyber}, Falcon requires sampling with variable means and variances for each coefficient, which significantly increases the complexity of implementing this procedure efficiently in hardware. Moreover, Falcon’s signature generation extensively relies on floating-point arithmetic, further complicating the design of lightweight, high-performance hardware accelerators.

\begin{algorithm}
\caption{\text{\textnormal{ffSampling}}\textsubscript{$n$}($\bm{t}$, $\mathsf{T}$)}
\label{ffsampling}
\begin{algorithmic}[1]
\State \textbf{Input:} $\bm{t} = (t_0, t_1) \in \text{FFT}\left(\left(\mathbb{Q}[x]/(x^n + 1)\right)^2\right)$, a \text{\textnormal{Falcon}} tree $\mathsf{T}$
\State \textbf{Output:} $\bm{z} = (z_0, z_1) \in \text{FFT}\left(\left(\mathbb{Z}[x]/(x^n + 1)\right)^2\right)$
\State \textbf{Format:} All polynomials are in FFT representation.
\If{$n = 1$}
    \State $\sigma' \gets \mathsf{T}.\text{value}$ \Comment{$\sigma' \in [\sigma_{\min}, \sigma_{\max}]$}
    \State $z_0 \gets \text{\textnormal{SamplerZ}}(t_0, \sigma')$ \Comment{$t_i = \text{invFFT}(t_i) \in \mathbb{Q}$}
    \State $z_1 \gets \text{\textnormal{SamplerZ}}(t_1, \sigma')$ \Comment{$z_i = \text{invFFT}(z_i) \in \mathbb{Z}$}
    \State \Return $\bm{z} = (z_0, z_1)$
\EndIf
\State $(\ell, \mathsf{T}_0, \mathsf{T}_1) \gets (\mathsf{T}.\text{value}, \mathsf{T}.\text{leftchild}, \mathsf{T}.\text{rightchild})$
\State $t_1 \gets \text{\textnormal{splitfft}}(t_1)$ \Comment{$t_0, t_1 \in \text{FFT}\left(\left(\mathbb{Q}[x]/(x^{n/2} + 1)\right)^2\right)$}
\State $\bm{z}_1 \gets \text{\textnormal{ffSampling}}\textsubscript{n/2}(t_1, \mathsf{T}_1)$
\State $z_1 \gets \text{\textnormal{mergefft}}(\bm{z}_1)$
\State $t_0' \gets t_0 + (t_1 - z_1) \odot \ell$
\State $t_0 \gets \text{\textnormal{splitfft}}(t_0')$
\State $\bm{z}_0 \gets \text{\textnormal{ffSampling}}\textsubscript{n/2}(t_0, \mathsf{T}_0)$
\State $z_0 \gets \text{\textnormal{mergefft}}(\bm{z}_0)$
\State \Return $\bm{z} = (z_0, z_1)$
\end{algorithmic}
\end{algorithm}

\subsection{Fast Fourier Sampling and Discrete Gaussian Sampling}
The discrete Gaussian sampling process in Falcon is implemented using Fast Fourier Sampling (FFS), a recursive algorithm operating in the Fourier domain to efficiently generate short lattice vectors. In this method, all polynomial inputs and intermediate results are maintained in their FFT representations, which significantly accelerates the necessary convolution and transformation operations.

In Algorithm~\ref{ffsampling}~\cite{Falcon} , where the SamplerZ is called during Falcon signature generation, at the base case of the recursion ( when \( n = 1 \)), the SamplerZ subroutine is invoked twice to generate two independent discrete Gaussian samples \( z_0 \) and \( z_1 \), as shown in lines 6–7 of Algorithm 1 (ffSampling). These two invocations operate on the scalar values \( t_0 \) and \( t_1 \), respectively, which are derived from the polynomial vector \( \mathbf{t} = (t_0, t_1) \in \text{FFT}(\mathbb{Q}[x]/(x^n + 1))^2 \). The sampled values \( z_0, z_1 \in \mathbb{Z} \) are then returned as the output vector \( \mathbf{z} \in \text{FFT}(\mathbb{Z}[x]/(x^n + 1))^2 \).

Importantly, SamplerZ is called twice during every terminal (non-recursive) iteration of ffSampling, and these calls are always paired. Furthermore, due to the recursive structure of the Falcon tree, the inputs to the second recursive call (lines 12,16) are dependent on the output from the first recursive branch (specifically, the output \( z_1 \) in line 8 is used to compute \( t_0' \) in line 13-15). Therefore, it is not feasible to parallelize arbitrary SamplerZ calls across the recursion tree. However, at the leaf level of the recursion (line 1), where \( n = 1 \), the two SamplerZ calls are independent and can be executed concurrently.

\subsection{SamplerZ and Rejection Sampling Characteristics}
The SamplerZ subroutine~\cite{Falcon} , shown in Algorithm~\ref{samplerz}, is responsible for generating scalar samples from a discrete Gaussian distribution parameterized by a floating-point center $\mu$ and standard deviation $\sigma'$. It employs a two-stage sampling strategy: a base sampling step followed by a probabilistic rejection procedure. Specifically, the algorithm first computes the fractional offset $r = \mu - \lfloor \mu \rfloor$ and a scaling constant $ccs = \sigma_{\min} / \sigma'$, then iteratively applies rejection sampling until a sample is accepted. Rejection sampling post-processes a uniform distribution to match the target discrete Gaussian distribution by generating uniform random numbers and accepting or rejecting candidate integers based on their likelihood under the target distribution.~\cite{rejection} This process corresponds to Step~10--11 in Algorithm~\ref{samplerz}, and specifically relies on its subroutines ApproxExp and BerExp, detailed in Algorithm~\ref{alg:approx} and Algorithm~\ref{alg:berexp}, respectively. The acceptance decision is made independently for both the left and right sampling branches. This procedure ensures correctness without requiring precomputed tables or floating-point Gaussian evaluation.

The sampler begins by generating a candidate value \( z_0 \) from a simplified base distribution using \textsc{BaseSampler()}. This function employs a cumulative distribution table (CDT) approach, which precomputes the cumulative probabilities of all possible integer values from the discrete Gaussian and stores them in descending order~\cite{CDTsampling}. To perform the sampling, a uniformly distributed random number is generated and compared against the table entries to identify the smallest index such that the table value is less than or equal to the sampled value. The resulting index corresponds to the candidate sample \( z_0 \). Following this, a uniformly sampled bit \( b \in \{0,1\} \) is used to randomly assign the sign of the output, effectively producing symmetric samples around the center.
The final candidate \( z \) is formed using symmetric extension around zero. A rejection decision is made based on the comparison of the computed exponent \( x \) with a Bernoulli trial:  
\[
x = \frac{(z - r)^2}{2\sigma'^2} - \frac{z_0^2}{2\sigma_{\text{max}}^2}
\]
If the `BerExp(x, ccs)` returns 1, the sample is accepted and adjusted by \( \lfloor \mu \rfloor \). Otherwise, the algorithm repeats from the sampling step.

Due to this probabilistic acceptance mechanism, each execution of `SamplerZ` can result in an indeterminate number of trials, and the final acceptance is non-deterministic. As noted earlier in the recursive `ffSampling` procedure (lines 6-7 of Algorithm 1), two calls to `SamplerZ` are made in tandem for \( z_0 \) and \( z_1 \). While these two calls are independent in terms of input, they are not guaranteed to complete simultaneously, as each may require a different number of rejection rounds.

This asynchronous nature leads to hardware inefficiency in a naive dual-datapath design: one path may remain idle while the other is still retrying. To mitigate this issue, we propose various methods. The internal structure of Bi-SamplerZ, including the BaseSampler and Bernoulli Exponential rejection mechanism, will be detailed in the following section.

\begin{algorithm}
\caption{SamplerZ($\mu$, $\sigma'$)}
\label{samplerz}
\begin{algorithmic}[1]
\State \textbf{Input:} Floating-point values $\mu$, $\sigma' \in \mathbb{R}$ such that $\sigma' \in [\sigma_{\min}, \sigma_{\max}]$ \Comment{$r$ must be in $[0,1)$}
\State \textbf{Output:} An integer $z \in \mathbb{Z}$ sampled close to $D_{\mathbb{Z},\mu,\sigma'}$ \Comment{$ccs$ makes runtime independent of $\sigma'$}
\State $r \gets \mu - \lfloor \mu \rfloor$
\State $ccs \gets \sigma_{\min} / \sigma'$
\While{true}
    \State $z_0 \gets \text{\textnormal{BaseSampler}}()$
    \State $b \gets \text{\textnormal{Uniformbits}}(8) \land 0x1$
    \State $z \gets b + (2 \cdot b - 1) z_0$
    \State $x \gets \frac{(z-r)^2}{2\sigma'^2} - \frac{z_0^2}{2\sigma_{\max}^2}$
    \If{\text{\textnormal{BerExp}}$(x, ccs) = 1$}
        \State \Return $z + \lfloor \mu \rfloor$
    \EndIf
\EndWhile
\end{algorithmic}
\end{algorithm}

\begin{algorithm}
\caption{\textnormal{ApproxExp}($x$, $ccs$)}
\label{alg:approx}
\begin{algorithmic}[1]
\Require Floating-point values $x \in [0, \ln(2)]$ and $ccs \in [0, 1]$
\Ensure An integral approximation of $2^{63} \cdot ccs \cdot \exp(-x)$
\State $C \gets \text{[0000000744183A3, \ldots, 7FFFFFFFFFFF4800]}$
\State $y \gets C[0]$ \Comment{$y \in [0, 2^{63})$ during entire algorithm}
\State $z \gets \lfloor 2^{63} \cdot x \rfloor$
\For{$i = 1$ to $12$}
    \State $y \gets C[i] - ((z \cdot y) \gg 63)$ \Comment{$z \cdot y$ fits in 126 bits}
\EndFor
\State $z \gets \lfloor 2^{63} \cdot ccs \rfloor$
\State $y \gets (z \cdot y) \gg 63$
\State \Return $y$
\end{algorithmic}
\end{algorithm}

\begin{algorithm}
\caption{\textnormal{BerExp}($x$, $ccs$)}
\label{alg:berexp}
\begin{algorithmic}[1]
\Require Floating-point values $x$, $ccs \geq 0$
\Ensure A single bit, equal to 1 with probability $\approx ccs \cdot \exp(-x)$
\State $s \gets \lfloor x / \ln(2) \rfloor$ \Comment{Decompose $x = s \cdot \ln(2) + r$}
\State $r \gets x - s \cdot \ln(2)$
\State $s \gets \min(s, 63)$
\State $z \gets \left(2 \cdot \texttt{ApproxExp}(r, ccs) - 1\right) \gg s$ \Comment{$z \approx 2^{64-s} \cdot ccs \cdot \exp(-r)$}
\State $i \gets 64$
\Repeat
    \State $i \gets i - 8$
    \State $w \gets \texttt{UniformBits}(8) - ((z \gg i) \land 0xFF)$
\Until{$(w \neq 0)$ or $(i = 0)$}
\State \Return $[w < 0]$
\end{algorithmic}
\end{algorithm}

\section{Architectural Overview of Bi-Samplerz}
The SamplerZ function is a critical bottleneck in Falcon signature generation due to its high invocation frequency and rejection-based nature. In each recursive call of Falcon’s ffSampling, SamplerZ may require multiple iterations before accepting a sample, resulting in significant latency. To address this, we propose a dual-sampling datapath architecture that executes two SamplerZ instances concurrently. Throughout this work, we use the suffixes $\_l$ and $\_r$ to distinguish variables associated with the left and right sampling paths, respectively, which correspond to the two independent discrete Gaussian samples required by Falcon.
 This approach not only improves throughput but also allows the system to exploit inter-datapath interaction to mitigate the impact of rejection loops. Furthermore, our architecture is designed to seamlessly interface with existing Falcon signing hardware frameworks, enhancing practicality and reusability in system-level integration.
\subsection{Hardware architecture of SamplerZ}

In conventional implementations, the pipeline stages of SamplerZ are often partitioned according to software subroutines, such as isolating BerExp as a separate pipeline stage. However, our profiling indicates that the loop in the ApproxExp subroutine dominates the overall runtime. To optimize this, we reorganize the pipeline based on data dependencies and preparations for next intermediate variables calculation for potential Bernoulli rejections.

Specifically, we divide the SamplerZ logic into the following hardware submodules: \textbf{Pre\_samp}, \textbf{Bef\_loop}, \textbf{For\_loop}, \textbf{CMP}, \textbf{Fpr\_adder}, and \textbf{Basesampler}. In addition, we integrate supporting modules such as \textbf{ChaCha20} for PRNG(Pseudorandom Number Generator) generation and \textbf{refill\_control} for random number management.

Each submodule corresponds to a specific computational phase within the original SamplerZ algorithm~\cite{Falcon} , and is detailed below along with its associated algorithm. An overview of the interconnection between these submodules is illustrated in Fig.~\ref{fig:samplerz_architecture}.

\begin{figure*}[!t]
\centering
\includegraphics[width=1\textwidth]{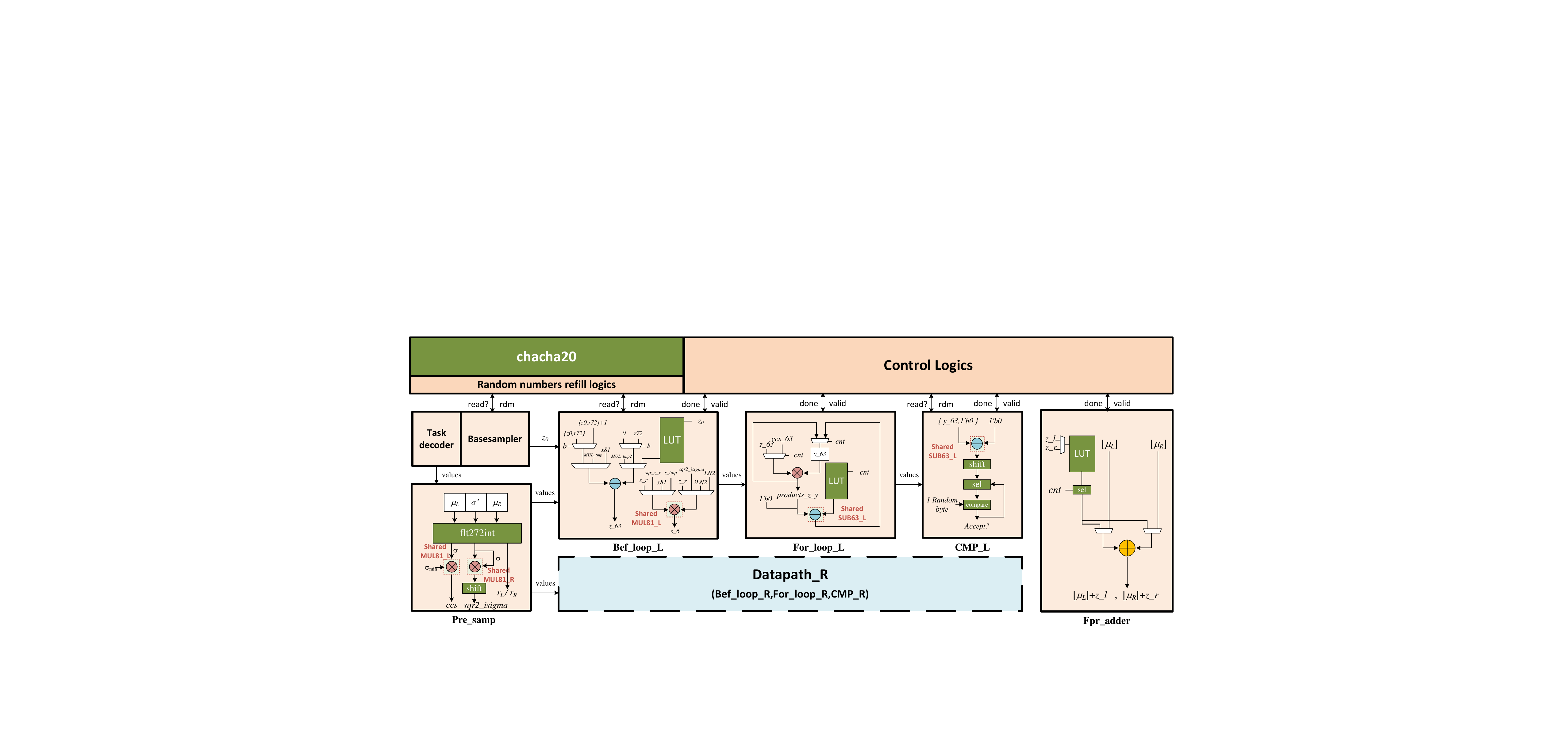} 
\caption{Overview of the SamplerZ functional decomposition and datapath connections.}
\label{fig:samplerz_architecture}
\end{figure*}

\subsubsection{Pre\_samp Module}

The \textbf{Pre\_samp} module performs one time initialization for each sampling task. It computes fractional offsets of the input centers and scaling factors necessary for later stages.

\begin{algorithm}
\caption{\text{\textnormal{Pre\_samp}}($\mu_l$, $\mu_r$, $\sigma'$)}
\label{alg:pre_samp}
\begin{algorithmic}[1]
\State \textbf{Input:} Floating-point values $\mu_l$, $\mu_r$, $\sigma'$
\State \textbf{Output:} Precomputed values $r_l$, $r_r$, $ccs$, and $\sigma'^2/2$
\State $r_l \gets \mu_l - \lfloor \mu_l \rfloor$, \quad $r_r \gets \mu_r - \lfloor \mu_r \rfloor$
\State $ccs \gets \sigma_{\min} / \sigma'$
\State $\mathsf{\sigma'^2 / 2} \gets \sigma'*\sigma' \gg 1$
\end{algorithmic}
\end{algorithm}

\subsubsection{Bef\_loop Module}

The \textbf{Bef\_loop} module prepares intermediate variables immediately before entering the rejection loop. It calculates the initial transformations for the rejection probability evaluation.

\begin{algorithm}
\caption{\text{\textnormal{Bef\_loop}}($b$, $z_0$, $r$, $\sigma'$, $\sigma_{\max}$)}
\label{alg:bef_loop}
\begin{algorithmic}[1]
\State \textbf{Input:} $b$, $z_0$, $r$, $\sigma'$, $\sigma_{\max}$
\State \textbf{Output:} Intermediate $z$ and $x$ for rejection checking
\State $z \gets b + (2 \cdot b - 1) \cdot z_0$
\State $x \gets \dfrac{(z - r)^2}{2\sigma^2} - \dfrac{z_0^2}{2\sigma_{\max}^2}$
\State $s \gets \lfloor x / \ln(2) \rfloor$
\State $r \gets x - s \cdot \ln(2)$
\State $s \gets \min(s, 63)$
\State Initialize constant list $C$ with predefined values
\State $y \gets C[0]$
\State $z \gets \lfloor 2^{63} \cdot r \rfloor$
\end{algorithmic}
\end{algorithm}

\subsubsection{For\_loop Module}

The \textbf{For\_loop} module implements the core iterative part of the exponential approximation within the rejection sampling phase.

\begin{algorithm}
\caption{\text{\textnormal{For\_loop}}($z$, $y$)}
\label{alg:for_loop}
\begin{algorithmic}[1]
\State \textbf{Input:} $z$, $y$
\State \textbf{Output:} Updated $y$
\For{$i = 1$ to $12$}
    \State $y \gets C[i] - (z \cdot y) \gg 63$
\EndFor
\State $z \gets \lfloor 2^{63} \cdot ccs \rfloor$
\State $y \gets (z \cdot y) \gg 63$
\end{algorithmic}
\end{algorithm}

\subsubsection{CMP Module}

The \textbf{CMP} module executes the final probabilistic acceptance decision based on the output of the exponential approximation.

\begin{algorithm}
\caption{\text{\textnormal{CMP}}($y$, $ccs$)}
\label{alg:cmp}
\begin{algorithmic}[1]
\State \textbf{Input:} $r$, $ccs$
\State \textbf{Output:} Boolean result
\State $z \gets 2 \cdot y \gg s$
\State $i \gets 64$
\Repeat
    \State $i \gets i - 8$
    \State $w \gets \textnormal{UniformBits}(8) - ((z \gg i) \land 0xFF)$
\Until{$(w \neq 0)$ or $(i = 0)$}
\If{$w < 0$}
    \State \Return True
\Else
    \State \Return False
\EndIf
\end{algorithmic}
\end{algorithm}

\subsubsection{Fpr\_adder Module}

The \textbf{Fpr\_adder} module combines the accepted candidate with the floor of the original center to produce the final discrete Gaussian sample.

\begin{algorithm}
\caption{\text{\textnormal{Fpr\_adder}}($x_l$, $x_r$, $z_l$, $z_r$, $\mu_l$, $\mu_r$, $ccs$)}
\label{alg:final_adder}
\begin{algorithmic}[1]
\State \textbf{Input:} Values $x_l$, $x_r$, $z_l$, $z_r$, $\mu_l$, $\mu_r$, $ccs$
\State \textbf{Output:} Final results after acceptance check
\If{$\text{\textnormal{BerExp}}(x_l, ccs) = 1$ \textbf{and} $\text{\textnormal{BerExp}}(x_r, ccs) = 1$}
    \State \Return $z_l + \lfloor \mu_l \rfloor$, $z_r + \lfloor \mu_r \rfloor$
\EndIf
\end{algorithmic}
\end{algorithm}

\subsubsection{Basesampler Module}

The \textbf{Basesampler} module generates initial candidate values by comparing random inputs against the Reverse Cumulative Distribution Table (RCDT).

\begin{algorithm}
\caption{\text{\textnormal{Basesampler}}($u$)}
\label{alg:basesampler}
\begin{algorithmic}[1]
\State \textbf{Input:} 144-bit random number $u$
\State \textbf{Output:} Two integers $z_{0l}$, $z_{0r}$
\State $u_l \gets u[0{:}71]$, \quad $u_r \gets u[72{:}143]$
\State $z_{0l} \gets 0$
\For{$i = 0$ to $17$}
    \State $z_{0l} \gets z_{0l} + \llbracket u_l < \text{\textnormal{RCDT}}[i] \rrbracket$
\EndFor
\State $z_{0r} \gets 0$
\For{$i = 0$ to $17$}
    \State $z_{0r} \gets z_{0r} + \llbracket u_r < \text{\textnormal{RCDT}}[i] \rrbracket$
\EndFor
\end{algorithmic}
\end{algorithm}

Although we process two samples in parallel, the architecture does not duplicate the entire SamplerZ pipeline. Observing that both samples typically share the same standard deviation $\sigma$, we optimize by sharing the Pre\_samp and BaseSampler modules across datapaths. Only the Bef\_loop and For\_loop stages are instantiated per datapath.

To ensure practical deployment and seamless integration into existing Falcon signature generation systems, we design the Bi-SamplerZ module with a same task and memory interface structure and relative read/write logics as proposed in the FalconSign open-source hardware implementation~\cite{falconsign}.

\subsection{Assistance Mechanism}

To justify the design of our dual-datapath assistance mechanism, we analyze empirical failure statistics of the \texttt{SamplerZ} function reported in~\cite{Lee}, as shown in Table~\ref{tab:samplerz-failure}. According to the data, approximately 57.58\% of executions succeed without rejection, while the remaining 42.42\% require one or more retry attempts.

\begin{table}[!t]
\caption{SamplerZ Execution Failure Statistics~\cite{falconsign}.}
\label{tab:samplerz-failure}
\centering
\setlength{\tabcolsep}{3pt}
\renewcommand{\arraystretch}{1.2}
\begin{tabular}{c|ccccccc|c}
\toprule
\textbf{\# Failures} & 0 & 1 & 2 & 3 & 4 & 5 & $\ge$6 & Total \\
\midrule
\textbf{Executions} & 58957 & 24978 & 10725 & 4465 & 1931 & 742 & 602 & 102400 \\
\midrule
\textbf{Ratio (\%)} & 57.58 & 24.39 & 10.47 & 4.36 & 1.89 & 0.72 & 0.59 & 100.00 \\
\bottomrule
\end{tabular}
\end{table}

The observed failure behavior approximately follows a geometric distribution with success probability $p \approx 0.5758$. Given two parallel sampling attempts, we can analyze their outcome probabilities as follows:
\begin{itemize}
    \item The probability that both samples succeed is $p^2 \approx 0.332$.
    \item The probability that both samples fail is $(1 - p)^2 \approx 0.179$.
    \item The probability that exactly one sample succeeds (i.e., one success and one failure) is:
    \[
    1 - p^2 - (1 - p)^2 = 2p(1 - p) \approx 0.489
    \]
\end{itemize}

This 48.9\% occurrence of asymmetric success/failure pairs presents an opportunity: instead of discarding the accepted sample and restarting both datapaths, we propose an \textbf{Assistance Mechanism}: When one datapath produces a valid result while the other is rejected, the successful datapath helps recompute intermediate values (e.g., using its accepted $\mu$ as input for the rejected path's precomputation), incurring a minor 9-cycle delay for the assisting datapath to switch its $\mu$ and calculate the inputs for its For\_loop module.

Furthermore, we observe that the assistance mechanism raises the acceptance probability in the next retry round to:
\[
p' = 1 - (1 - p)^2 \approx 1 - (1 - 0.5758)^2 = 0.823
\]
This represents a substantial improvement in success likelihood with minimal overhead, offering a more area-efficient alternative to the four-path overprovisioning strategy used by Lee  \textit{et al.}~\cite{Lee}, which they use SW to generate abundant number of $z_{0}$ and $b$(6-7 lines of Algorithm~\ref{samplerz} ) to feed four datapath to increase acceptance rate to 97\% but with significantly higher hardware cost.

\subsection{Control Flow}

The control logic of Bi-SamplerZ is governed by a finite state machine (FSM), designed to coordinate task management, random number preparation, sampling execution, and rejection handling. The state transitions ensure high hardware utilization and minimize randomness waste across the dual datapaths. The detailed behavior of each state and their transitions is described as follows:
\begin{figure*}[!t]
\centering
\includegraphics[width=0.95\textwidth]{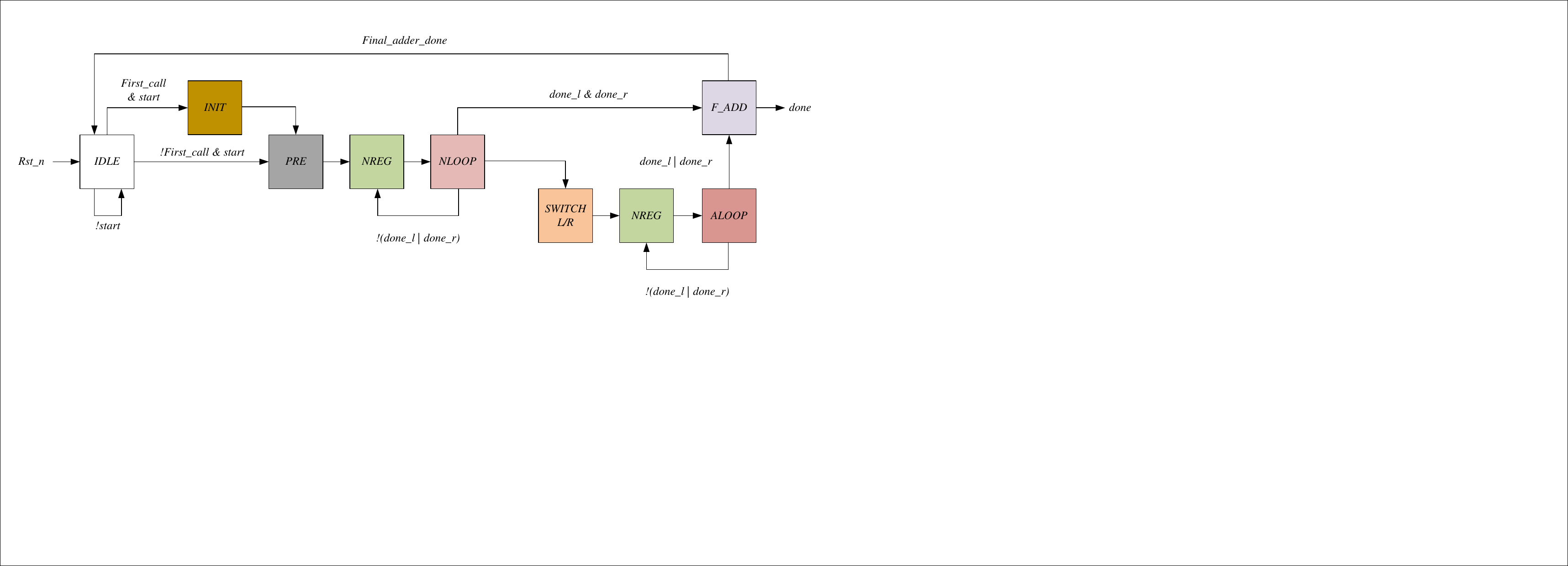} 
\caption{State Transfer Diagram of Bi-SamplerZ}
\label{fig:State Machine of Bi-SamplerZ}
\end{figure*}
\begin{itemize}
    \item \textbf{IDLE State}:\\
    Upon assertion of the rts\_n signal, the system enters the IDLE state, indicating that the module is in a reset or standby condition. In this state, all internal registers and buffers are cleared, and the system waits for a new task to be issued.
    
    \item \textbf{PRE State (Preprocessing)}:\\
    When a new task arrives, the restart signal determines the subsequent behavior. If restart is asserted, the ChaCha20 module initiates pseudorandom seed generation, and the refill\_control module begins populating its random number buffers. Concurrently, the first BaseSampler invocation is performed to ensure that a candidate sample is ready when needed. During the PRE stage, constants and intermediate values that remain invariant across rejection loops are precomputed by the Pre\_samp and Bef\_loop modules. This setup minimizes redundant computations during the sampling loops.
    
    \item \textbf{NREG State (Intermediate Storage)}:\\
    Following preprocessing, the system enters the NREG (Normal Register) state. In this phase, key intermediate data, such as precomputed constants and partial computation results, are stored into dedicated registers for later use during sampling. This state is necessary to ensure the stability of intermediate values during the subsequent iteration state, in which the loop execution occurs. Meanwhile, we aim to allow the preceding modules to compute the next set of intermediate values concurrently. Therefore, the current intermediate results must be stored explicitly.
    
    \item \textbf{NLOOP State (Normal Sampling Loop)}:\\
    After data registration, the system transitions into the NLOOP (Normal Loop) state, where dual sampling datapaths operate in parallel. In this state:
    \begin{itemize}
        \item The BaseSampler and Bef\_loop modules are triggered again to prepare next-round candidate values in anticipation of possible rejection.
        \item If both datapaths successfully pass the BerExp acceptance test, the system immediately transitions to the \textbf{FINAL} stage to finalize and output the samples.
        \item If both samples are accepted, the previously generated $z_0$ candidate values from BaseSampler are reused for subsequent sampling tasks, significantly reducing random number consumption and avoiding the entropy waste observed in prior designs.
    \end{itemize}
    
    \item \textbf{SWITCH State (Accepted Path Reassignment)}:\\
    If only one of the two datapaths successfully passes BerExp, the system enters the SWITCH state. Here, the accepted datapath executes an expedited recalculation of intermediate values (with updated $\mu$) to assist the rejected datapath. This recalibration process incurs a minor penalty of approximately 7 additional cycles but substantially improves the acceptance probability in the subsequent iteration.
    
    \item \textbf{ALOOP State (Assistance Loop)}:\\
    After switching, the system proceeds to the ALOOP (Assistance Loop) state. In this mode, the datapath associated with the initially rejected sample is supported by the recalculated values from the assisting datapath. The condition for moving to the FINAL stage becomes relaxed: as long as \textbf{at least one} datapath successfully passes the acceptance check, the sampling task is considered complete. This cooperative mechanism enhances acceptance probability without requiring excessive redundancy in hardware or unnecessary randomness usage.
    
    \item \textbf{F\_ADD State}:\\
    In the F\_ADD stage, post-acceptance operations are performed, including executing the Fpr\_adder modules to produce the final discrete Gaussian samples. Once results are finalized and stored, the system either returns to the IDLE state or initiates the next sampling task, depending on the task queue status.
\end{itemize}

The FSM logic thus balances between maximizing resource utilization and minimizing cycle waste due to rejection sampling, achieving a more efficient and deterministic sampling throughput compared to conventional approaches. To better illustrate the activation behavior of each hardware submodule across different states, the detailed mapping between FSM states and active functional modules is summarized in Table~\ref{tab:state_module_activation}.
\begin{table*}[!t]
\centering
\caption{State-to-Module Activation Mapping for Bi-SamplerZ}
\label{tab:state_module_activation}
\begin{tabular}{|c|c|c|c|c|c|c|c|c|c|c|c|c|}
\hline
\multirow{2}{*}{\textbf{State}} & \multicolumn{12}{c|}{\textbf{Activated Modules}} \\
\cline{2-13}
& \textbf{Pre\_samp} & \textbf{Bef\_l} & \textbf{Bef\_r} & \textbf{For\_l} & \textbf{For\_r} & \textbf{Cmp\_l} & \textbf{Cmp\_r} & \textbf{Fpr\_adder} & \textbf{Base} & \textbf{ChaCha} & \textbf{Refill\_l} & \textbf{Refill\_r} \\
\hline
INIT      &        &        &        &        &        &        &        &              & \checkmark           & \checkmark        & \checkmark         & \checkmark         \\
IDLE      &        &        &        &        &        &        &        &              &             &          &           &           \\
PRE       & \checkmark     & \checkmark      & \checkmark      &        &        &        &        &              &             &          & \checkmark           & \checkmark           \\
NREG      &        &        &        &        &        &        &        &              &             &          &           &           \\
NLOOP     &        & \checkmark      & \checkmark      & \checkmark      & \checkmark      & \checkmark      & \checkmark      &              & \checkmark           &          & \checkmark           & \checkmark           \\
ALOOP     &        & \checkmark      & \checkmark      & \checkmark      & \checkmark      & \checkmark      & \checkmark      &              & \checkmark           &          & \checkmark           & \checkmark           \\
SWITCHL   &        &        & \checkmark      &        &        &        &        &              &             &          &           &           \\
SWITCHR   &        & \checkmark      &        &        &        &        &        &              &             &          &           &           \\
F\_ADD    &        &        &        &        &        &        &        & \checkmark            &             &          &           &           \\
\hline
\end{tabular}
\end{table*}

\section{Microarchitecture-Level Optimizations}

In addition to the global architectural optimizations, several fine-grained microarchitectural optimizations are incorporated into Bi-SamplerZ to further enhance performance, reduce latency, and minimize area and power consumption. These optimizations are summarized as follows:

\subsection{Fixed-Point Representation}

Observing that in the \texttt{SamplerZ} function, except for the central value $\mu$, all other variables reside within a known, bounded range, we adopt a fixed-point integer representation to replace conventional floating-point arithmetic. Specifically, we scale all values by $2^{72}$ and represent them using a custom \textbf{81-bit unsigned fixed-point format}—with 72 bits for the fractional part and 9 bits for the integer part.

\subsubsection{Range Justification}

In our pipelined hardware implementation, intermediate variables are cached and reused. Among all operations in Algorithm~\ref{samplerz}, the maximum intermediate value arises in Step~9:
\[
x = \frac{(z - r)^2}{2\sigma'^2} - \frac{z_0^2}{2\sigma_{\max}^2}
\]
Here, $(z - r)^2$ dominates the dynamic range. Given $z_0 \in [0,18]$, $r \in [0,1)$, and $z = \pm z_0 + b$, we have:
\[
\max (z - r)^2 \approx (19)^2 = 361
\]
Thus, $x$ and its intermediate terms never exceed 361. A 9-bit unsigned integer can represent up to $2^9 = 512 \gg 361$, making 9 integer bits sufficient to safely store all intermediate values throughout the computation.

\subsubsection{Precision Justification}

We next demonstrate that our 81-bit fixed-point representation retains the numerical fidelity of IEEE 754 double-precision floating-point numbers throughout the entire computation pipeline of the \texttt{SamplerZ} algorithm.

Let $a \in \mathbb{R}$ be a real number represented in IEEE 754 double-precision format as:
\[
\widetilde{a} = (1.\text{mantissa}) \cdot 2^{e}, \quad \text{with } e \in [-1022, 1023]
\]
We define its 81-bit fixed-point encoding $b$ as:
\[
b = \left\lfloor \widetilde{a} \cdot 2^{72} \right\rfloor = \left\lfloor (1.\text{mantissa}) \cdot 2^e \cdot 2^{72} \right\rfloor
\]

\vspace{1mm}
\begin{lemma}[No Loss in Inputs Conversion]
\label{lem:conversion}
If a double-precision value $\widetilde{a}$ has a binary exponent $e \geq -20$, then converting it to our 81-bit fixed-point representation does not lose precision.
\end{lemma}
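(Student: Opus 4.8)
The plan is to show that an IEEE 754 double-precision value $\widetilde{a} = (1.\text{mantissa}) \cdot 2^e$ with $e \ge -20$ is exactly representable in the 81-bit fixed-point format, i.e. that $\lfloor \widetilde{a}\cdot 2^{72}\rfloor = \widetilde{a}\cdot 2^{72}$ exactly, and that the resulting integer fits in 81 bits. First I would recall that the double-precision mantissa carries 52 explicit fractional bits, so $\widetilde{a}$ can be written as $M \cdot 2^{e-52}$ where $M$ is an integer with $2^{52} \le M < 2^{53}$. Then $\widetilde{a}\cdot 2^{72} = M \cdot 2^{e+20}$. The key observation is that when $e \ge -20$, the exponent $e+20 \ge 0$, so $M \cdot 2^{e+20}$ is already an integer; the floor operation is therefore a no-op and no bits of the mantissa are discarded. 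This is the heart of the argument — the threshold $-20$ is exactly what makes the $2^{72}$ scaling absorb the 52 fractional mantissa bits plus the $-20$ of slack.

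Next I would confirm the representation stays within the 81-bit budget. Since all non-$\mu$ quantities in SamplerZ are bounded (by the range analysis already given, values never exceed $361 < 2^9$), we have $\widetilde{a} < 2^9$, hence $\widetilde{a}\cdot 2^{72} < 2^{81}$, so the fixed-point encoding $b$ lies in $[0, 2^{81})$ and fits the 9-integer-bit / 72-fractional-bit layout. For the center $\mu$ itself, which is the one value that can exceed this range, I would note that only the fractional offset $r = \mu - \lfloor\mu\rfloor \in [0,1)$ actually enters the fixed-point pipeline, and $\lfloor\mu\rfloor$ is handled separately as an integer, so the bounded-range premise still applies to everything fed into the fixed-point datapath.

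The main obstacle — really the only subtle point — is justifying that every intermediate value arising in the computation (not just the inputs) also satisfies $e \ge -20$, or more precisely that the arithmetic operations (fixed-point multiply-with-shift as in ApproxExp/For\_loop, subtraction, squaring) do not silently require more than 72 fractional bits to remain exact. I would argue this by noting that the fixed-point operations are defined to truncate to 72 fractional bits by construction (the $\gg 63$ shifts and the chosen scaling), so the question reduces to a rounding-error budget rather than exactness; the lemma as stated, however, only concerns \emph{input} conversion, so for the lemma itself it suffices to handle the conversion step, and I would flag that the propagation of precision through the pipeline is the subject of the surrounding precision justification rather than this lemma. If a fully rigorous treatment were wanted, I would additionally check the smallest-magnitude inputs that occur in practice — e.g. $ccs = \sigma_{\min}/\sigma'$ and the constants $C[i]$ — and verify their binary exponents are all $\ge -20$, which follows from $\sigma_{\min}, \sigma'$ being $O(1)$ and the $C[i]$ being close to powers of two near $1$.
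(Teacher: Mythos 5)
Your proposal is correct and takes essentially the same approach as the paper: the paper's proof is exactly the observation that $(1.\text{mantissa})\cdot 2^{e}\cdot 2^{72}\geq 2^{52}$ forces $e\geq -20$, which is the same threshold computation you obtain by writing $\widetilde{a}\cdot 2^{72}=M\cdot 2^{e+20}$ and requiring $e+20\geq 0$. Your version is somewhat more explicit (showing the floor is a no-op on an integer, checking the $2^{81}$ upper bound, and correctly scoping the intermediate-operation question to the following lemma), but the core argument is identical.
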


\begin{proof}
To preserve all 52 bits of mantissa during conversion, the scaled value must satisfy:
\[
(1.\text{mantissa}) \cdot 2^e \cdot 2^{72} \geq 2^{52}
\Rightarrow 2^{e} \geq 2^{-20}
\]
Thus, provided $e \geq -20$, the fixed-point encoding $b$ captures all significant bits without truncation loss.
\end{proof}
For both Falcon-512 and Falcon-1024 parameter sets, the value of $\sigma$ lies within a well-defined range, typically between 1 and 2 according to the value of $\sigma_{\max}$ and $\sigma_{\min}$~\cite{Falcon}. On the other hand, the value of $\mu$ is highly unlikely to be as small as $2^{-20}$ or even smaller. Our examination of the official Falcon test vectors confirms that all sampled $\mu$ values are greater than 10, indicating that such small magnitudes (e.g., in the order of $2^{-20}$) are practically nonexistent.

\vspace{1mm}
\begin{lemma}[No Precision Loss During Intermediate Operations]
\label{lem:arithmetic-precise}
Let $x$ and $y$ be two values encoded in our 81-bit fixed-point format. If both values have effective binary exponents $e_x, e_y \geq -20$, then all intermediate arithmetic operations involved in the sampling process---including shifts, subtractions, and multiplications---incur no rounding or representation loss.
\end{lemma}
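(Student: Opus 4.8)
The plan is to argue operation-by-operation that the 81-bit fixed-point format is closed under the arithmetic appearing in \texttt{SamplerZ}, and that each operation either is exact or coincides with the corresponding IEEE~754 double-precision result up to the representable precision. I would first invoke Lemma~\ref{lem:conversion} to establish the base case: every input with effective exponent $e \geq -20$ enters the pipeline without loss, so the hypothesis $e_x, e_y \geq -20$ is satisfied at the start. Then I would treat the three operation classes in turn. For \emph{right shifts by a constant} (the $\gg 63$ and $\gg s$ steps in ApproxExp/BerExp), the operation is by definition a floored division by a power of two on the integer encoding; since the software reference also floors, the fixed-point result matches the reference exactly, and the exponent of the result only decreases, so it stays $\geq -20$ only if the magnitude does not underflow below $2^{-72}$ — I would bound the operands (using the range facts: $x \le 361$, $ccs \in (0,1]$, $y < 2^{63}$ in ApproxExp) to show the shifted quantities never drop below the $2^{-72}$ granularity, hence no bits are lost that the reference would have kept.

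For \emph{subtraction}, closure in 81 bits follows from the range justification already given ($|x|$ and all intermediates stay below $361 < 2^9$), so no integer-part overflow occurs; exactness is immediate because fixed-point subtraction of two multiples of $2^{-72}$ is a multiple of $2^{-72}$ with no rounding, whereas double-precision subtraction of nearby values is itself exact (Sterbenz) or loses only bits below $2^{-72}$. For \emph{multiplication}, the key observation is that the product of two 81-bit fixed-point numbers scaled by $2^{72}$ has the form $(\tilde a 2^{72})(\tilde b 2^{72}) = \widetilde{ab}\cdot 2^{144}$, a 126-bit (or smaller) integer that the algorithm immediately renormalizes by $\gg 63$; I would check that the pre-shift product is computed in full width (so it is exact), and that after the $\gg 63$ the retained 72 fractional bits still dominate the 52-bit mantissa that double precision would keep — again using $e_x,e_y \ge -20$ so that $e_{xy} \ge -40$ and $2^{72}$ scaling leaves at least $72-40 = 32$ guard bits, comfortably more than the mantissa width once one tracks the leading bit. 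I would then close the induction: each operation preserves the invariant $e \ge -20$ on its output (or the output is exactly zero, which is handled trivially), so the hypothesis propagates through the entire straight-line portion of the loop, and the iterated loop in ApproxExp (12 rounds) preserves it because each round is one multiply-shift followed by one subtract, both already covered.

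The main obstacle I anticipate is the exponent bookkeeping across the multiply-then-shift idiom: a multiplication can momentarily produce a value whose natural exponent is as low as $-40$, and one must verify that the $\gg 63$ renormalization does not discard significant bits relative to what double precision would retain — i.e., that $72$ fractional bits minus the worst-case exponent drop still exceeds the $52$-bit mantissa, \emph{and} that the leading-bit alignment in the reference code (the factor $2^{63}$ in $z \gets \lfloor 2^{63}\cdot x\rfloor$) is matched by our scaling so that the two computations stay bit-aligned. A secondary subtlety is that double precision is itself only an approximation, so strictly the claim is ``no \emph{additional} loss beyond what double precision already incurs''; I would state the lemma in that relative sense and show the fixed-point error is always $\le 2^{-72}$ in absolute terms while the double-precision error is $\ge 2^{-72}$ whenever $e \ge -20$, which is exactly the content of Lemma~\ref{lem:conversion} applied to intermediate results. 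With those two points pinned down, the remaining cases are the routine per-operation checks sketched above.
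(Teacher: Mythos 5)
Your overall strategy matches the paper's: a case analysis over the three operation classes (shifts, subtractions, multiplications), with the multiplication handled by computing the full-width product exactly and then checking that the renormalizing shift retains at least as many significant bits as a double-precision mantissa. Your treatment of shifts and subtraction is in fact more careful than the paper's (you correctly frame the right shift as a floored division that matches the integer reference rather than as a lossless operation, and you note Sterbenz-type exactness for the subtraction), and none of that is problematic.

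The genuine gap is in your multiplication case. From the hypothesis $e_x, e_y \geq -20$ you correctly deduce only $e_x + e_y \geq -40$, but you then claim that the $2^{72}$ scaling leaves ``at least $72 - 40 = 32$ guard bits, comfortably more than the mantissa width.'' That inequality goes the wrong way: a product with exponent $-40$, encoded as a multiple of $2^{-72}$, is an integer of magnitude roughly $2^{32}$, i.e.\ about $33$ significant bits — fewer than the $53$ bits that double precision would retain for the same product. To preserve the full mantissa after extracting the middle $81$ bits one needs $2^{e_x+e_y}\cdot 2^{72} \geq 2^{52}$, i.e.\ $e_x + e_y \geq -20$, which is exactly the condition the paper writes in its proof of this step. (The paper's own proof therefore quietly assumes a condition stronger than the lemma's stated hypothesis; it does not close that gap either, but it at least records the correct sufficient inequality.) To repair your argument you would need either to strengthen the working hypothesis to $e_x + e_y \geq -20$ for every multiplication, or to argue from the concrete operand ranges in \texttt{SamplerZ} (e.g.\ that in each product at least one factor, such as $y \in [0, 2^{63})$ scaled appropriately or $(z-r)^2 \geq 1$ in the dominant branch, has a nonnegative effective exponent) so that the sum of exponents never drops below $-20$. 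As written, the step ``$32$ guard bits suffice'' is false and the induction does not close.
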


\begin{proof}
The intermediate computations in the SamplerZ algorithm include bitwise shifts, subtractions, and multiplications. We analyze each case to show that precision is preserved throughout:

\begin{itemize}
    \item \textbf{Shift Operations:}  
    These correspond to exact multiplications or divisions by powers of two and preserve all significant bits. This is consistent with FalconSign~\cite{falconsign}, where fixed-point values are represented by scaled integers, ensuring precision retention.

    \item \textbf{Subtraction:}  
    Operands share a common scale, and thus no bit cancellation occurs. Since $e_x, e_y \geq -20$, there is no risk of overflow or underflow, and subtraction remains exact.

    \item \textbf{Multiplication:}  
    Multiplying two 81-bit values yields a 162-bit product. By extracting the middle 81 bits (discarding 72 LSBs and 9 MSBs), we preserve at least 53 mantissa bits if:
    \[
    2^{e_x + e_y} \cdot 2^{72} \geq 2^{52} \quad \Rightarrow \quad e_x + e_y \geq -20
    \]
    The design constraint $x \cdot y < 2^9$ ensures that discarding the top 9 bits does not affect correctness.
\end{itemize}

Thus, all intermediate operations preserve double-precision accuracy under the stated exponent bounds.
\end{proof}

\vspace{1mm}
\begin{theorem}[Precision Preservation of 81-bit Representation]
\label{lem:precision-protect}
If all intermediate and input floating-point values in the \texttt{SamplerZ} algorithm have binary exponents $e \geq -20$, then the proposed 81-bit fixed-point representation preserves all the precision of double-precision IEEE 754 arithmetic throughout both conversion and computation.
\end{theorem}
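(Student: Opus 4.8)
The plan is to obtain the theorem as an immediate corollary of Lemma~\ref{lem:conversion} and Lemma~\ref{lem:arithmetic-precise}, glued together by a structural induction over the sequence of operations that \texttt{SamplerZ} performs. Concretely, I would model one invocation of the algorithm (together with its subroutines \texttt{ApproxExp}, \texttt{BerExp}, and \texttt{BaseSampler}) as a finite straight-line program: a list of operations $o_1, o_2, \dots, o_N$, where each $o_k$ is either an exact integer step (the \texttt{UniformBits} draws, the RCDT comparisons, the $\llbracket\cdot\rrbracket$ indicators, the sign fix-up $z \gets b + (2b-1)z_0$, the final $z + \lfloor\mu\rfloor$) or one of the three arithmetic primitives covered by Lemma~\ref{lem:arithmetic-precise} --- a shift, a subtraction, or a multiplication with middle-81-bit extraction --- whose operands are either the converted inputs or the outputs of earlier $o_j$, $j<k$.

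First I would dispatch the conversion claim: every floating-point input to \texttt{SamplerZ} --- namely $\mu_l,\mu_r,\sigma'$ and the derived $r$, $ccs$, $\sigma'^2/2$ produced by \texttt{Pre\_samp} --- is, by hypothesis, a double with binary exponent $e \geq -20$, so Lemma~\ref{lem:conversion} applies verbatim and the 81-bit encoding of each retains all 52 mantissa bits. I would reinforce this with the already-stated empirical remark that for Falcon-512 and Falcon-1024 the parameter $\sigma'$ lies between $1$ and $2$ and the official test vectors satisfy $\mu > 10$, so the exponent precondition is met in every real deployment, not merely postulated.

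Next I would run the induction on the straight-line program, with inductive hypothesis that $o_1,\dots,o_{k-1}$ have been computed exactly, i.e. their fixed-point results equal $\lfloor 2^{72}\cdot(\text{true value})\rfloor$ of the corresponding double-precision quantities. For the step $o_k$: if $o_k$ is an exact integer operation, its operands are exact integers in range (the $9$-bit integer-part bound from the range-justification argument, together with $z = \pm z_0 + b$ and $z_0 \in [0,18]$), so it is computed exactly; if $o_k$ is a shift, subtraction, or multiplication, then by the theorem's own hypothesis its operands --- being intermediate values of \texttt{SamplerZ} --- have effective exponents $\geq -20$, so the relevant bullet of Lemma~\ref{lem:arithmetic-precise} applies. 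In the multiplication case I would note that the product is itself one of the ``intermediate floating-point values'' to which the hypothesis grants exponent $\geq -20$, which is exactly the condition $e_x + e_y \geq -20$ that Lemma~\ref{lem:arithmetic-precise} needs for the middle-bit extraction to be lossless (a normalization carry can shift this by at most one bit, still inside the $[2^{52},2^{81})$ window). Chaining through $o_N$ then gives that the output coincides bit-for-bit with the IEEE~754 double-precision evaluation.

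The main obstacle I anticipate is not any single estimate but the \emph{exhaustiveness} of the case split: one must verify that every primitive actually executed by \texttt{SamplerZ} and its subroutines --- including the $\lfloor x/\ln 2\rfloor$ and $\lfloor 2^{63} x\rfloor$ truncations, the $\min(s,63)$ clamp, and the $\gg 63$ right-shifts inside the $12$-iteration loop --- falls into one of Lemma~\ref{lem:arithmetic-precise}'s three categories or is an exact integer step, so that no hidden rounding (e.g. a division that is not by a power of two) sneaks in. A secondary subtlety is justifying that the exponent-$\geq -20$ precondition is genuinely invariant under the loop body, so that ``all intermediate values have $e \geq -20$'' is a hypothesis one is entitled to assume rather than one that quietly fails after a few multiplications; here the $x < 2^9$ range bound and the fact that the iterates of the $y \gets C[i] - ((z\cdot y)\gg 63)$ recurrence stay in $[0,2^{63})$, as annotated in Algorithm~\ref{alg:approx}, are what keep the exponents from drifting below $-20$.
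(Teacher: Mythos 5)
Your proposal takes essentially the same route as the paper: the theorem is obtained by combining Lemma~\ref{lem:conversion} (lossless conversion of inputs with $e \geq -20$) with Lemma~\ref{lem:arithmetic-precise} (lossless shifts, subtractions, and multiplications under the same exponent bound), and concluding that precision is preserved end to end. The only difference is that you make the gluing explicit via an induction over the straight-line program and flag the exhaustiveness of the case split and the invariance of the exponent precondition as points needing verification, whereas the paper's proof simply cites the two lemmas and asserts the conclusion --- so your version is, if anything, the more careful rendering of the same argument.
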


\begin{proof}
According to Lemma~\ref{lem:conversion}, all input floating-point values in the SamplerZ algorithm with binary exponents $e \geq -20$ can be accurately converted into the proposed 81-bit fixed-point format without any precision loss. Furthermore, Lemma~\ref{lem:arithmetic-precise} shows that under the same exponent condition, all intermediate operations---including shifts, subtractions, and multiplications---preserve sufficient mantissa bits to maintain double-precision accuracy.

Therefore, if Bi-SamplerZ ensures that all intermediate and input satisfy $e \geq -20$, then the proposed 81-bit fixed-point representation retains the full numerical precision of the original IEEE 754 double-precision format throughout the entire sampling process.
\end{proof}

In Algorithm ~\ref{samplerz}, the range analysis confirms that all relevant variables, including $(z - r)^2$, $\sigma'^2$, $\ln(2)$, and intermediate rejection terms, fall well above $2^{-20}$. And Karabulut\textit{et al.} ~\cite{Karabulut} also use the integer presentation with $2^{72}$ extension in their work. Therefore, our representation is safe and optimal for hardware acceleration without introducing rounding artifacts.

\subsection{Precomputed LUTs for Arithmetic Simplification}
Similarly, in Line~7 of the Algorithm~\ref{samplerz}, the computation involves the term:
\[
\frac{z_0^2}{2\sigma_{\max}^2}
\]
which appears in the final rejection probability calculation:
\[
x = \frac{(z - r)^2}{2\sigma'^2} - \frac{z_0^2}{2\sigma_{\max}^2}
\]
Here, $\sigma_{\max}$ is a design-time constant, and $z_0$ is sampled from a small finite set, typically $z_0 \in \{0, 1, \dots, 18\}$. We take advantage of this determinism by precomputing the subtraction component:
\[
T[z_0] = \frac{z_0^2}{2\sigma_{\max}^2}, \quad \text{for } z_0 = 0, 1, \dots, 18
\]
and storing all values in a small, dedicated lookup table (LUT).

At runtime, rather than dynamically calculating this term, the system simply selects $T[z_0]$ from the LUT. This optimization avoids repetitive squaring and division operations, saving clock cycles and reducing dynamic power consumption. The LUT-based approach also improves critical path timing by removing arithmetic from the main sampling pipeline.

In addition, Line~11 of Algorithm~\ref{samplerz}, which returns $z + \lfloor \mu \rfloor$, involves an addition between an integer and a floating-point number. Traditionally, this would require a dedicated integer-to-floating-point conversion module. However, since the sampled $z$ value also lies in a small fixed range (typically $z \in \{0, 1, \dots, 18\}$), we precompute and store their floating-point representations in a separate LUT. The integer $z$ is then used directly as a selector signal for the corresponding float value. This further reduces both latency and hardware resource usage by avoiding costly conversion logic.

\subsection{BaseSampler Logic Optimization with Tri-State Gates}

In traditional designs, the BaseSampler module computes the sampled value $z_0$ by summing a sequence of binary comparison results using a 16-input accumulator. While straightforward, this approach incurs high latency and significant area cost, particularly due to the wide-bit counter and accumulation logic. In practice, such designs are typically split across multiple clock cycles to meet timing constraints.

To overcome these limitations in \textbf{ASIC deployments}, we propose an optimized implementation that exploits the monotonic property of the RCDT (Reverse Cumulative Distribution Table). The binary comparison results against the RCDT entries always follow a deterministic pattern—specifically, a series of consecutive ones followed by zeros. Therefore, identifying the position of the last ‘1’ directly yields the sampled index $z_0$.

Our solution uses a cascade of \textbf{XOR gates} to detect transitions between adjacent comparison bits (i.e., detecting $1 \rightarrow 0$). Each XOR output drives the enable signal of a corresponding \textbf{tri-state buffer}, which is hardwired to output its index value when selected. Since only one transition from ‘1’ to ‘0’ can occur due to the monotonicity of the RCDT, at most one tri-state buffer is active at any time—ensuring safe and deterministic resolution. This design eliminates the need for large adders or counters, significantly reducing both critical path delay and area usage.

\begin{figure}[!t]
\centering
\includegraphics[width=0.48\textwidth]{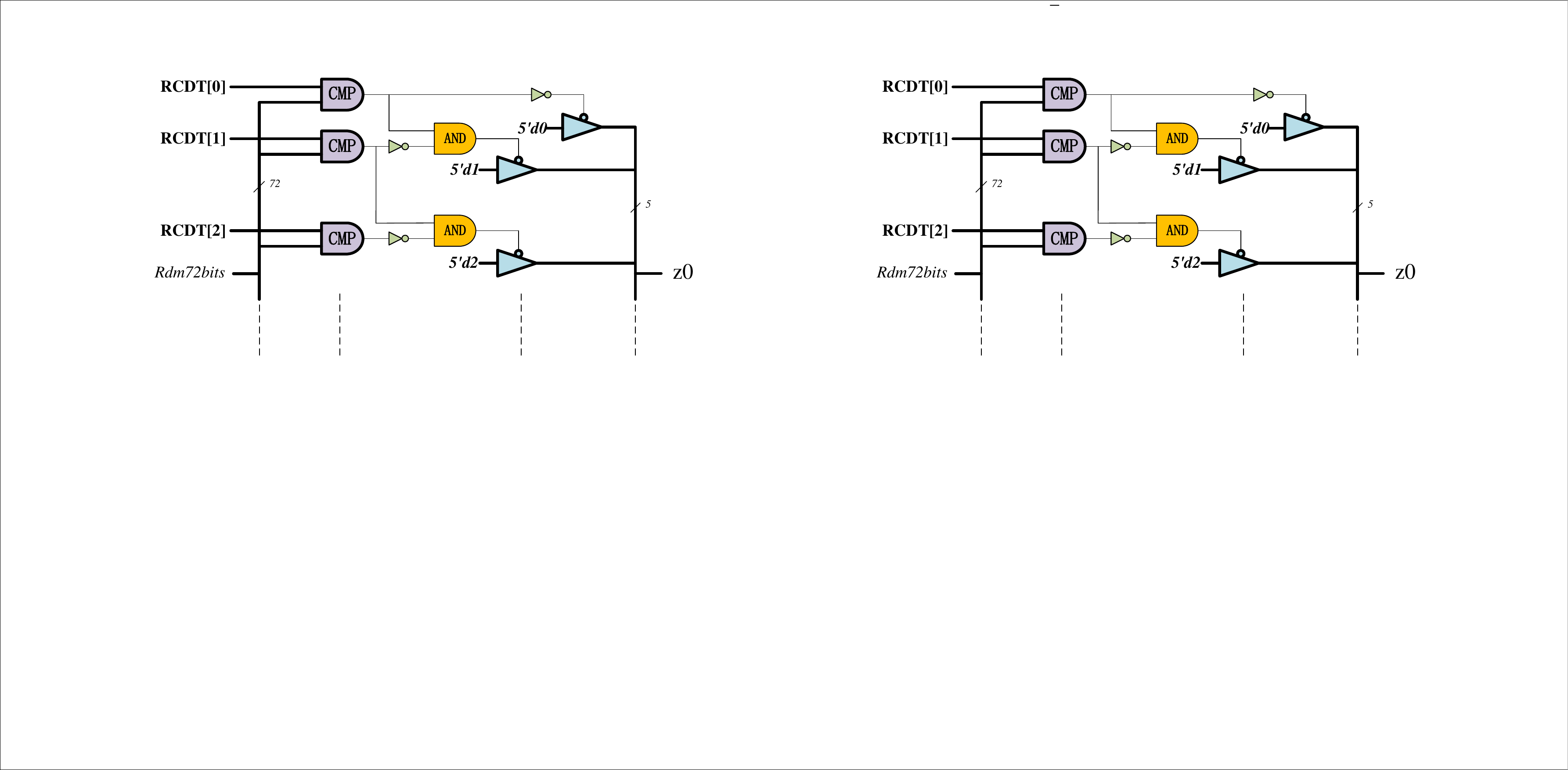} 
\caption{Tri-state gate-based logic for optimized BaseSampler}
\label{fig:Tri-state Gates Logics}
\end{figure}

While tri-state buffers are efficient in ASIC flows, they are generally \textbf{not supported or discouraged} in FPGA designs due to routing limitations and logic fabric constraints. For FPGA implementations, the same logic can be realized using a \textbf{priority encoder} to detect the position of first '0', which functionally replicates the behavior of tri-state selection while remaining synthesis-compatible for FPGA targets.

\section{Experimental Results}
In this section, we first present the resource utilization on FPGA and the area distribution on ASIC for Bi-SamplerZ, using Vivado 2019.2 and Synopsys Design Compiler under the GF22nm process node. Subsequently, we evaluate the latency performance of our design. A detailed comparison is then conducted against the SamplerZ submodule of FalconSign in terms of area, latency, and area-time product (ATP). Finally, we extend our analysis to include other representative related works for a comprehensive evaluation.
FalconSign is the most recent and complete hardware-only implementation of the Falcon signature scheme, and the authors claim to achieve the lowest known latency to date. Thus, it serves as a strong benchmark for comparison and validates the competitiveness of our proposed architecture.

\begin{table}[!t]
\centering
\caption{FPGA and ASIC Resource Utilization and Power of Bi\_SamplerZ}
\label{tab:resource_power}
\renewcommand{\arraystretch}{1.15}
\begin{tabular}{lrrrrr}
\toprule
Module & LUTs & FFs & DSPs & \makecell{ASIC Area \\ ($\mu$m$^2$)} & Power (mW) \\
\midrule
refill\_control\_l & 760  & 274  & 0  & 1.14e3 & 1.077 \\
Pre\_samp          & 1667 & 727  & 0  & 2.44e3 & 2.856 \\
flt272int          & 1435 & 386  & 0  & 1.64e3 & 1.560 \\
For\_loop\_l        & 549  & 132  & 16 & 6.20e3 & 0.702 \\
MUL63\_l            & 422  & 0    & 16 & 5.74e3 & 0.186 \\
Fpr\_adder       & 783  & 1020 & 3  & 1.15e3 & 0.639 \\
floating\_point    & 695  & 872  & 3  & /      & /     \\
CMP\_l             & 352  & 134  & 0  & 5.53e2 & 0.501 \\
Chacha20           & 5964 & 6687 & 0  & 2.53e4 & 26.871 \\
Bef\_loop\_l        & 573  & 495  & 0  & 1.56e3 & 1.951 \\
SUB81\_l           & 266  & 0    & 0  & 2.27e2 & 0.011 \\
Basesampler        & 760  & 18   & 0  & 2.29e2 & 0.058 \\
MUL81\_l           & 273  & 0    & 25 & 9.53e3 & 0.319 \\
\midrule
\textbf{Total}     & 14327 & 10841 & 85 & 6.87e4 & 40.993 \\
\bottomrule
\end{tabular}
\end{table}

\begin{table*}[!t]
\centering
\begin{threeparttable}
\caption{Comparison of SamplerZ Implementations (Two Samplings)}
\label{tab:related-works}
\renewcommand{\arraystretch}{1.2}
\begin{tabular}{lccccc}
\toprule
\makecell{\textbf{Design} \\ \textbf{(Type)}} & 
\makecell{\textbf{Area} \\ \textbf{(LUT / FF / DSP or $\mu$m$^2$)}} & 
\makecell{\textbf{Freq} \\ \textbf{(MHz)}} & 
\makecell{\textbf{Cycles} \\ \textbf{(2 samplings)}} & 
\makecell{\textbf{Normalized} \\ \textbf{Latency}} & 
\makecell{\textbf{ATP} \\ \textbf{(LUT / FF / DSP or mm$^2$)}} \\
\midrule
\textbf{Our Design (HW, ASIC)}           & 68700 $\mu$m$^2$         & 1500 & 59   & 1.00$\times$ & 1.00$\times$ \\
FalconSign (HW, ASIC)           & 58426 $\mu$m$^2$         & 500  & 137  & 6.97$\times$ & 5.92$\times$ \\
FalconRef (SW)                  & /                         & 666  & 193494 & 7386.73$\times$ & / \\
\\
\textbf{Our Design (HW, FPGA)}           & 14327 / 10841 / 85       & 150  & 59   & 1.00$\times$ & 1.00$\times$ / 1.00$\times$ / 1.00$\times$ \\
FalconSign (HW, FPGA)           & 14710 / 10731 / 76       & 185  & 137  & 1.88$\times$ & 2.45$\times$ / 1.87$\times$ / 1.70$\times$ \\
RVCE-FAL (HW/HW\tnote{*}, FPGA) & 4761 / 692 / 9            & 83   & 208  & 6.37$\times$ & 1.29$\times$ / 0.17$\times$ / 0.20$\times$ \\
AHS Base (HW/SW, FPGA)          & 3456 / 732 / 116          & 117  & 3382 & 73.50$\times$ & 4.74$\times$ / 0.36$\times$ / 10.80$\times$ \\
AHS Perf Opt (HW/SW, FPGA)      & 2523 / 2028 / 9           & 117  & 3452 & 75.03$\times$ & 3.46$\times$ / 1.01$\times$ / 0.84$\times$ \\
AHS Area Opt (HW/SW, FPGA)      & 2328 / 1030 / 3           & 117  & 3614 & 78.54$\times$ & 3.19$\times$ / 0.51$\times$ / 0.28$\times$ \\
\bottomrule
\end{tabular}
\vspace{1ex}
\begin{tablenotes}
\footnotesize
\item[*] RVCE-FAL’s SamplerZ is denoted as HW/HW since it requires integration with external hardware modules (e.g., PRNG and refill logic) to achieve full functionality.
\end{tablenotes}
\end{threeparttable}
\end{table*}

\subsection{Resources Usage and Performance Results of Bi-SamplerZ}

\subsubsection{Resources Usage}
We synthesized the proposed Bi\_SamplerZ architecture using Vivado 2019.2, targeting the Xilinx ZCU104 FPGA platform, consistent with the experimental setup adopted in FalconSign~\cite{falconsign}. The design was evaluated with a target frequency of 150\,MHz. Table~\ref{tab:resource_power} reports the detailed power consumptions and utilization of resources for both the FPGA and the 22nm GF ASIC implementations.

In the FPGA implementation, Bi\_SamplerZ consumes approximately 14.3K LUTs and 10.8K FFs, with a moderate usage of 85 DSP blocks running in 150MHz. Datapath-heavy modules such as chacha20 and MUL81\_l/r dominate the usage of logic and multipliers.

Additionally, the design was synthesized using the GF 22nm technology with a commercial ASIC toolchain. The total area of the design cell is approximately 68.7K $\mu$m$^2$ with a power of 40.993mw, operating reliably at a clock frequency of 1.5 GHz. Among all components, chacha20 and MUL81\_l/r together account for more than 50\% of the total ASIC area, which aligns with their roles as the most computation-intensive modules. These results demonstrate the practical deployment of Bi\_SamplerZ in both FPGA-based prototypes and high-throughput ASIC environments.
\subsubsection{Latency Analysis}
The Bi\_SamplerZ design achieves significant latency reduction through two key architectural optimizations: precomputation and an assistance mechanism. Based on post-synthesis simulation results, we observe that two consecutive SamplerZ invocations require 19 clock cycles in the PRE stage (corresponding to Pre\_samp), 34–41 cycles in the ALOOP stage (covering both For\_loop and CMP evaluations), 1 cycle in the NREG stage for intermediate data storage, and 9 cycles in the SWITCHL/R states used for datapath coordination and handoff.

To evaluate the average end-to-end latency, we designed a simulation model referencing the Bernoulli rejection distribution from Table~\ref{tab:samplerz-failure}. Using an average rejection probability of approximately $0.5758$, we computed the expected number of cycles required to generate two accepted samples in the designs compared. 

In addition, we evaluated the expected number of cycles for completing two successful samplings without introducing the Assistance Mechanism, again using a SW simulation. Although the Assistance Mechanism introduces an additional overhead of $9$ cycles due to the SWITCHL/R state, it still reduces the overall expected latency by $5.48$ cycles compared to the Bi\_SamplerZ without it. This improvement translates to a $4.91\%$ reduction in average sampling latency, demonstrating the effectiveness of the proposed mechanism in accelerating end-to-end execution.

\begin{table}[!t]
\renewcommand{\arraystretch}{1.2}
\caption{Latency Comparison for Two SamplerZ Calls}
\label{tab:latency-comparison}
\centering
\begin{threeparttable}
\begin{tabular}{lcc}
\toprule
Design & Expected Cycles & Cycles w/o Rejection \\
\midrule
FalconSign~\cite{falconsign} & 230.83 & 137 \\
Bi\_SamplerZ (with AM\tnote{*}) & 106.08 (45.9\%) & 59 (43.1\%) \\
Bi\_SamplerZ (w/o AM) & 111.54 & 59 \\
\bottomrule
\end{tabular}
\begin{tablenotes}
\footnotesize
\item[*] AM refers to the \textit{Assistance Mechanism}. “w/o” indicates “without”.
\end{tablenotes}
\end{threeparttable}
\end{table}

\subsection{Comparison with Related Works}

Table~\ref{tab:related-works} provides a detailed comparison between Bi\_SamplerZ and existing SamplerZ implementations in terms of latency, area, and area-time product (ATP). To ensure fairness, we re-synthesized all open-source RTL designs using the same EDA toolchain and technology node. As the results indicate, Bi\_SamplerZ achieves the lowest sampling latency reported to date while incurring only modest area and resource overhead. Among fully hardware-independent SamplerZ designs, Bi\_SamplerZ also demonstrates the most favorable ATP.

We first compare our design with our benchmark-the sampler module Falconsign~\cite{falconsign}. 
The baseline FalconSign design reports latencies of 11 cycles for Pre\_samp, 16 cycles for Samp\_loop, and 40–47 cycles for BerExp, which form the primary latency components in their SamplerZ pipeline. 
As shown in Table~\ref{tab:latency-comparison}, Bi\_SamplerZ reduces the expected latency of two discrete Gaussian samplings to \textbf{43.1\%} FalconSign's baseline implementation. To the best of our knowledge, this achieves the lowest latency up to date. And also thanks to fine-grained pipeline orchestration and dynamic assistance scheduling, Bi-SamplerZ doesn't cost much more HW resource to achieve the lowest latency. Actually, our design achieves a \textbf{6.97$\times$} throughput improvement while increasing hardware cost by only \textbf{17.6\%}, leading to a significantly better ATP.

Next, we compare with software-based implementations. For the official Falcon reference implementation, our ASIC-synthesized version achieves a latency that is only \textbf{1/7386} of the original software design, underscoring the acceleration enabled by full hardware optimization.

Finally, we compare with other related works. RVCE-FAL~\cite{Yu} employs custom RISC-V instructions and the open-source Berkeley HardFloat library to minimize hardware resource usage. It achieves excellent hardware efficiency on the same Xilinx UltraScale+ ZCU104 platform as ours. However, Bi\_SamplerZ includes additional essential logic, such as the \texttt{chacha20}-based PRNG and refill modules, which are not part of RVCE-FAL's standalone SamplerZ unit. Therefore, RVCE-FAL is denoted as \textbf{HW/HW} to reflect its reliance on external hardware cooperation. Despite this, Bi\_SamplerZ still achieves a better LUT-based ATP and reduces the sampling latency to only \textbf{15.67\%} of RVCE-FAL's. Moreover, since Bi\_SamplerZ adopts the same interface as Falconsign's~\cite{falconsign} SamplerZ module, it holds strong potential for seamless integration into complete Falconsign systems. In contrast, RVCE-FAL incurs significantly higher latency: \textbf{9.7$\times$} and \textbf{10.0$\times$} for Falcon-512 and Falcon-1024, respectively, compared to Falconsign, not to say if we integrate Bi-SamplerZ in Falconsign systems.

The design proposed by Emre Karabulut and Aydin Aysu~\cite{Karabulut} adopts a hardware-software co-design strategy to offload the SamplerZ functionality. However, it performs no fine-grained hardware optimization on the SamplerZ itself. Consequently, its average latency is over \textbf{70$\times$} higher than Bi\_SamplerZ across all three variants, and its LUT-based ATP is also significantly inferior.
\section{Conclusion}
We proposed \emph{Bi\_SamplerZ}, a fully hardware-optimized dual-path discrete Gaussian sampler for accelerating Falcon signature generation. Bi\_SamplerZ achieves the lowest reported latency and the best area–time product (ATP) among existing hardware and hardware–software co-designs, while offering high throughput and integration flexibility.

Future work includes embedding Bi\_SamplerZ into the full FalconSign~\cite{falconsign} framework and extending architectural optimizations to other key modules in the signature pipeline.
\bibliography{main}

\begin{IEEEbiography}[{\includegraphics[width=1in,height=1.25in,clip,keepaspectratio]{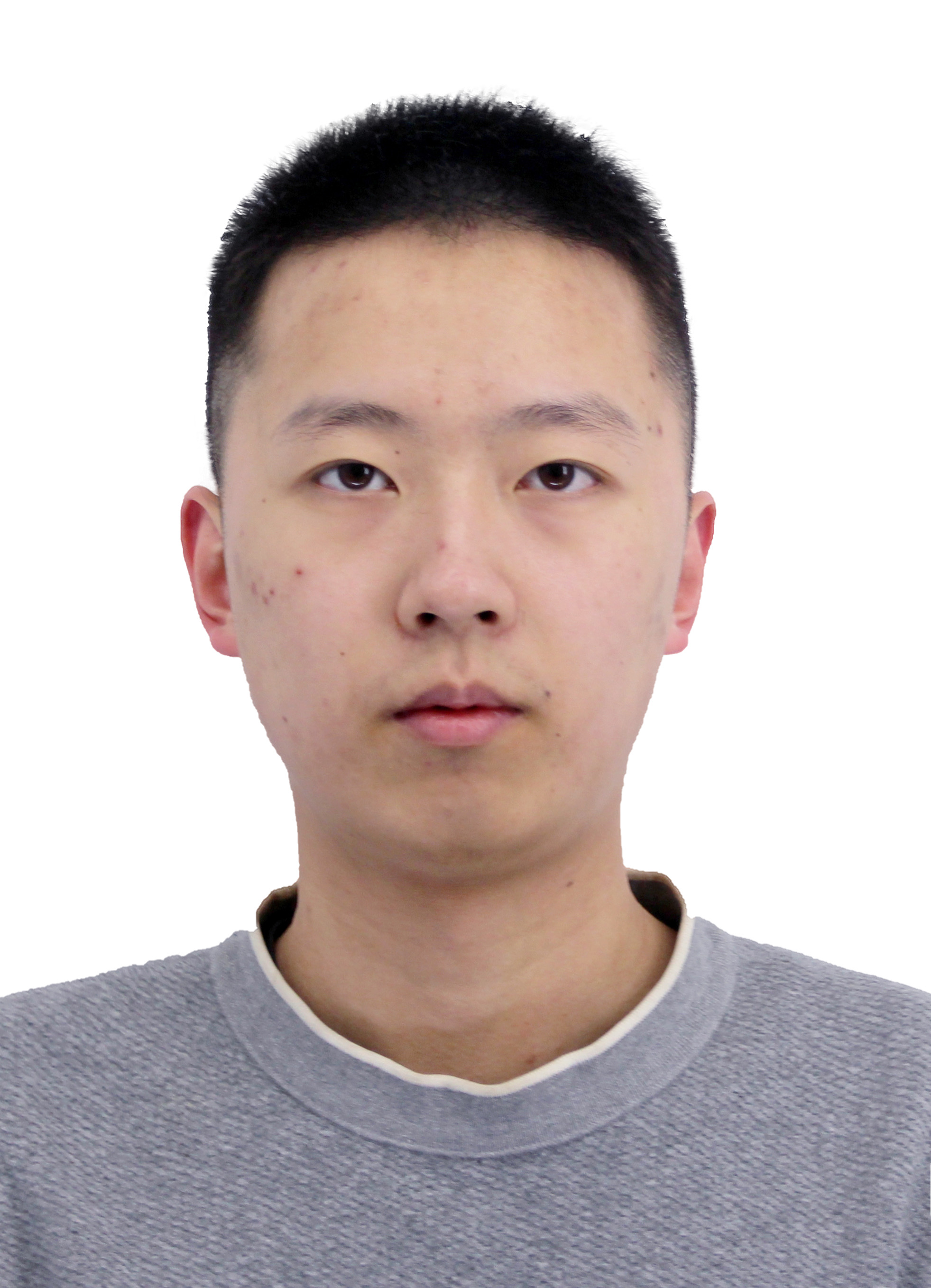}}]{Binke Zhao}
is currently an undergraduate student in the School of Integrated Circuit at the University of Electronic Science and Technology of China (UESTC). He is enrolled in the university’s custom training program and spent one academic term as an exchange student at Khalifa University. He has served as a project leader in several course-level engineering projects. His research interests include microarchitecture, system-on-chip (SoC) design, and digital hardware acceleration.
\end{IEEEbiography}
 
\begin{IEEEbiography}[{\includegraphics[width=1in,height=1.25in,clip,keepaspectratio]{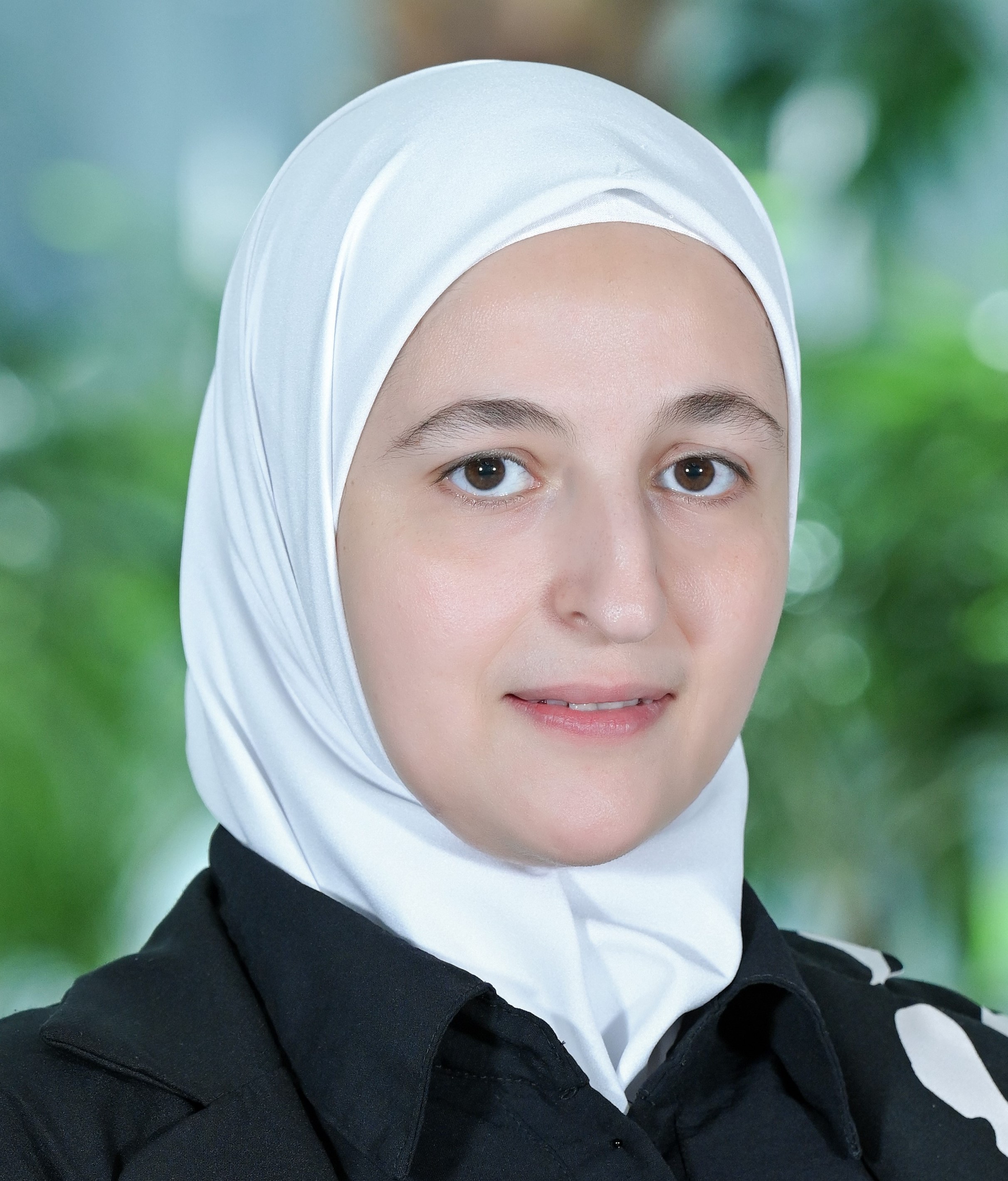}}]{Ghada Alsuhl}
earned her B.S. and M.S. in Electronics and Communication Engineering from Damascus University, Syria (2009, 2015), and completed her Ph.D. in Electronics and Communication Engineering at Cairo University, Egypt (2019). Her academic journey was enriched by roles at esteemed research centers including the National Research Center, The American University in Cairo, Egypt, and the Khalifa University SoC Center, UAE. Currently, she serves as a Post-Doctoral Researcher at Khalifa University, leading several projects focused on efficient hardware implementation for AI and post-quantum cryptography. Her research encompasses embedded systems, energy-efficient IoT solutions, edge computing, efficient hardware implementation, and AI applications for wireless communications and biomedical engineering. She is the primary author of numerous papers in reputable journals and conferences, she has also authored a book on efficient DNN hardware implementation. Her contributions extend to reviewing for esteemed journals and committee memberships for international conferences.
 
\end{IEEEbiography}
 
\begin{IEEEbiography}[{\includegraphics[width=1in,height=1.25in,clip,keepaspectratio]{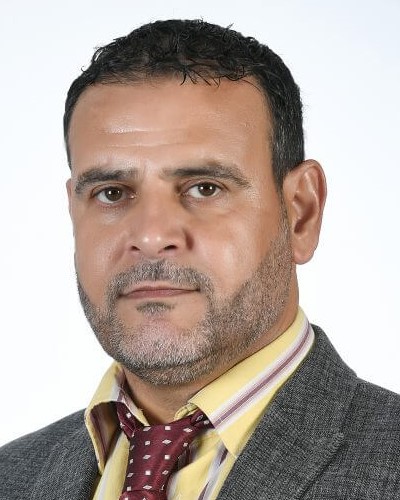}}]{Hani Saleh}
(Senior Member, IEEE) is an associate professor of  ECE at Khalifa University since 2017, he joined Khalifa on 2012. He is a co-founder of Khalifa University Research Center 2012-2018, and the System on Chip Research Center (SoC 2019-present). Hani has a total of 19 years of industrial experience in ASIC chip design, Microprocessor/Microcontroller design, DSP core design, Graphics core design and embedded systems design. Hani a Ph.D. degree in Computer Engineering from the University of Texas at Austin. Prior to joining Khalifa University he worked for many leading semiconductor design companies including Apple, Intel, AMD, Qualcomm, Synopsys, Fujitsu  and Motorola Australia. Hani has published more than 48 journal papers, more than 119 conference papers, more than 6 books and 7 book chapters. Hani research areas includes but not limited to: AI Accelerator design, Digital ASIC Design, Digital Design, Computer Architecture and Computer Arithmetic.
 
\end{IEEEbiography}
 
\begin{IEEEbiography}[{\includegraphics[width=1in,height=1.25in,clip,keepaspectratio]{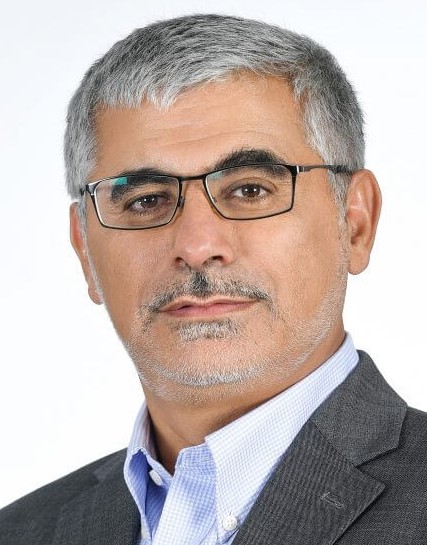}}]{Baker Mohammad}(Senior Member, IEEE)
holds a PhD in Electrical and Computer Engineering (ECE) from the University of Texas at Austin and an M.S. in ECE from Arizona State University, Tempe a B.S. degree in ECE from the University of New Mexico, Albuquerque. He is a distinguished lecturer of IEEE CAS.  Dr Mohammad is currently a professor of Electrical Engineering and Computer Science (EECS) at Khalifa University and is the director of the SOCL.  Before joining Khalifa University, Dr. Baker worked for 16 years in the US industry (Qualcomm \& Intel), designing low-power and high-performance processors. Baker’s research interests include VLSI, power-efficient computing, high-yield embedded memory, and emerging technologies such as Memristor, STTRAM, In-Memory-Computing, Hardware accelerators and power management. Dr. Baker has authored or co-authored over 200 referred journals and conference proceedings, more than three books, and over 20 U.S. patents. 
\end{IEEEbiography}
 
\end{document}